\def\UseBibLatex{1}
\def\input@path{{styles/}}
\providecommand{\BibLatexMode}[1]{}
\providecommand{\BibTexMode}[1]{}
\renewcommand{\BibLatexMode}[1]{#1}
\renewcommand{\BibTexMode}[1]{}
  \renewcommand{\BibLatexMode}[1]{}
  \renewcommand{\BibTexMode}[1]{#1}
\theoremstyle{plain}%
\newtheorem{theorem}{Theorem}[section]
\newtheorem{lemma}[theorem]{Lemma}
\newtheorem{corollary}[theorem]{Corollary}
\theoremstyle{plain}%
\newtheorem*{remark:unnumbered}[theorem]{Remark}%
\newtheorem{remark}[theorem]{Remark}%
\newtheorem{defn}[theorem]{Definition}
\theoremstyle{nonumberplain}%
\newtheorem{proof}{Proof:}%
\providecommand{\emphind}[1]{}%
\renewcommand{\emphind}[1]{\emph{#1}\index{#1}}
\definecolor{blue25emph}{rgb}{0, 0, 11}
\providecommand{\emphic}[2]{}
\renewcommand{\emphic}[2]{\textcolor{blue25emph}{%
      \textbf{\emph{#1}}}\index{#2}}
\providecommand{\emphi}[1]{}%
\renewcommand{\emphi}[1]{\emphic{#1}{#1}}
\definecolor{almostblack}{rgb}{0, 0, 0.3}
\providecommand{\emphw}[1]{}%
\renewcommand{\emphw}[1]{{\textcolor{almostblack}{\emph{#1}}}}%
\providecommand{\emphOnly}[1]{}%
\renewcommand{\emphOnly}[1]{\emph{\textcolor{blue25emph}{\textbf{#1}}}}
\newcommand{\myqedsymbol}{\rule{2mm}{2mm}}
\newcommand{\SarielThanks}[1]{%
   \thanks{%
      Department of Computer Science; %
      University of Illinois; %
      201 N. Goodwin Avenue; %
      Urbana, IL, 61801, USA; %
      \href{mailto:spam@illinois.edu}{sariel@illinois.edu}; %
      \url{http://sarielhp.org/}. %
   #1%
   }%
}
\newcommand{\HLink}[2]{\hyperref[#2]{#1~\ref*{#2}}}
\newcommand{\HLinkSuffix}[3]{\hyperref[#2]{#1\ref*{#2}{#3}}}
\newcommand{\figlab}[1]{\label{fig:#1}}
\newcommand{\figref}[1]{\HLink{Figure}{fig:#1}}
\newcommand{\lemlab}[1]{\label{lemma:#1}}
\newcommand{\lemref}[1]{\HLink{Lemma}{lemma:#1}}%
\providecommand{\eqlab}[1]{}%
\renewcommand{\eqlab}[1]{\label{equation:#1}}
\providecommand{\remove}[1]{}%
\newcommand{\pth}[1]{\mleft(#1\mright)}%
\newcommand{\ceil}[1]{\mleft\lceil {#1} \mright\rceil}
\newcommand{\floor}[1]{\mleft\lfloor {#1} \mright\rfloor}
\newcommand{\cardin}[1]{\left\lvert {#1} \right\rvert}%
\renewcommand{\th}{th\xspace}
\newlist{compactenumA}{enumerate}{5}%
\setlist[compactenumA]{topsep=0pt,itemsep=-1ex,partopsep=1ex,parsep=1ex,%
   label=(\Alph*)}%
\newlist{compactenuma}{enumerate}{5}%
\setlist[compactenuma]{topsep=0pt,itemsep=-1ex,partopsep=1ex,parsep=1ex,%
   label=(\alph*)}%
\newlist{compactenumI}{enumerate}{5}%
\setlist[compactenumI]{topsep=0pt,itemsep=-1ex,partopsep=1ex,parsep=1ex,%
   label=(\Roman*)}%
\newlist{compactenumi}{enumerate}{5}%
\setlist[compactenumi]{topsep=0pt,itemsep=-1ex,partopsep=1ex,parsep=1ex,%
   label=(\roman*)}%
\newlist{compactitem}{itemize}{5}%
\setlist[compactitem]{topsep=0pt,itemsep=-1ex,partopsep=1ex,parsep=1ex,%
   label=\ensuremath{\bullet}}%
\numberwithin{figure}{section}%
\numberwithin{table}{section}%
\numberwithin{equation}{section}%
\newcommand{\Root}{\mathcalb{r}}
\newcommand{\Term}[1]{\textsf{#1}}
\newcommand{\DFS}{\Term{DFS}\xspace}%
\newcommand{\nI}{k}%
\newcommand{\curr}{\mathcalb{c}}%
\newcommand{\CTree}{\mathcal{C}}%
\newcommand{\DTree}{\mathcal{T}}%
\newcommand{\Tree}{\mathcal{T}}%
\providecommand{\etal}{et~al.\xspace}
\renewcommand{\etal}{et~al.\xspace}
\newcommand{\LCAY}[2]{\mathrm{lca}\pth{#1,#2}}
\newcommand{\VX}[1]{V\pth{#1}}%
\providecommand{\IntRange}[1]{\mleft\llbracket #1 \mright\rrbracket}
\newcommand{\IRX}[1]{\IntRange{#1}}%
\newcommand{\IRY}[2]{\left\llbracket #1:#2 \right\rrbracket}
\newcommand{\orderX}[1]{\prec_{#1}^{}}
\providecommand{\deflab}[1]{\label{def:#1}}
\providecommand{\deflab}[1]{\label{def:#1}}
\newcommand{\dLCAY}[2]{d_{\mathrm{lca}}(#1,#2)}
\newcommand{\defrefY}[2]{\hyperref[def:#2]{#1}}
\newcommand{\constA}{\alpha}
\newcommand{\UB}{\mathcal{W}}%
\newcommand{\II}{\mathcal{I}}%
\begin{document}

\title{Bifurcation: How to Explore a Tree}

\author{Sariel Har-Peled\SarielThanks{Work on this paper was partially
      supported by NSF AF award CCF-2317241.  }}

\date{\today}

\maketitle

\begin{abstract}
    Parametric search is a powerful but hard-to-use technique in computational geometry. Avraham \etal \cite{afkks-dsfds-15} presented a new approach, called \emphw{bifurcation}, that performs faster under certain circumstances. Intuitively, when the underlying decider execution can be rolled back cheaply and the decider has a near-linear running time. For some problems, this leads to fast algorithms that beat the seemingly natural lower bound arising from distance selection.

    Bifurcation boils down to a tree exploration problem. You are given a binary (unfortunately implicit) tree of height $n$ and $\nI$ internal nodes with two children (all other internal nodes have a single child), and assume each node has an associated parameter value. These values are sorted in the inorder traversal of the tree.   Assume there is (say) a node (not necessarily a leaf) that is the target node that the exploration needs to discover.

    The player starts from the root. At each step, the player can move to adjacent nodes to the current location (i.e., one of the children or the parent). Alternatively, the player can call an oracle on the current node, which returns either that it is the target (thus, mission accomplished!\footnote{The player can withdraw,}) or whether the target value is strictly smaller or larger than the current one.

    A naive algorithm explores the whole tree, in $O(n \nI)$ time, then performs $O(\log \nI n)$ calls to the oracle to find the desired leaf. Avraham \etal \cite{afkks-dsfds-15} showed that this can be improved to $O(n \sqrt{\nI} )$ time, and $O( \sqrt{\nI} \log n)$ oracle calls.

    Here, we improve this to $O(n \sqrt{\nI} )$ time, with only $ O( \sqrt{\nI} + \log n)$ oracle calls. We also show matching lower bounds, under certain assumptions. The new algorithm implies a minor improvement (i.e., roughly a $\log n$ factor) in the running times for bifurcation algorithms. Unfortunately, these algorithms are infested with large polylog terms, and this improvement is underwhelming. Nevertheless, we believe our interpretation of bifurcation as a tree exploration problem, and the associated algorithm, are of independent interest.
\end{abstract}

\section{Introduction}

\paragraph{Bifurcation.}
The bifurcation technique was first presented by Avraham \etal \cite{afkks-dsfds-15} (a somewhat more readable presentation is provided by Kaplan \etal \cite{kkss-uwrsp-23}). A nifty recent paper using this technique is by Chan and Huang \cite{ch-farsp-25}. See the introduction of \cite{ch-farsp-25} for a recent literature review.

\paragraph{The input.}
An \emphi{$(n,\nI)$-tree} $\Tree$ is a rooted tree with exactly $\nI$ internal nodes having two children, and all its other interval nodes having only one child. Furthermore, all the leaves of $\Tree$ are at a distance of at most $n$ from the root $\Root$ of $\Tree$. An internal node with two children is a \emphi{fork}. A fork has a \emphi{left} and \emphi{right} child, which readily induces an ordering on the nodes of the tree -- by the \DFS traversal of $\Tree$ that recursively visits the left child before the node itself, and then the right child. A special (unknown) node of $\Tree$ is the \emphi{target}. The task at hand is to discover the target.

The tree is given implicitly -- initially, only the root $\Root$ is given. The algorithm can traverse the tree by walking on it locally. Namely, it maintains a current node $\curr$ in the tree. In a single step, the algorithm can move $\curr$ to any neighboring nodes, either upward (to its parent) or downward to one of its children (if it is a fork, or to its only child otherwise). When the algorithm enters a previously unexplored vertex, it is also given its status: a fork, a regular vertex, or a leaf.

\begin{defn}
    For any node $x \in \Tree$, let $\pi(x)$ denote the unique \emphw{node-to-root} path from $x$ to the root $\Root$.  For two nodes $u,v \in \VX{\Tree}$, $u$ is \emphi{smaller} than $v$, denoted by $u \orderX{\Tree} v$, if $u$ appears before $v$ in the inorder \DFS traversal of $\Tree$.
\end{defn}

The algorithm may consult (for free\footnote{Well, if you were to bribe the oracle with a gift, it would not refuse you. But, really, cash is best in such cases, judging by Greek mythology.}) with an oracle.

\begin{defn}[Target oracle]
    \deflab{oracle}%
    Given a query node $\curr \in \VX{\Tree}$, the \emphi{target oracle} returns one of the following: %
    \begin{compactenumI}
        \smallskip%
        \item  $\curr$ is the target (the algorithm is done),

        \smallskip%
        \item the target is smaller than $\curr$, or

        \smallskip%
        \item the target is larger than $\curr$.
    \end{compactenumI}
\end{defn}
\medskip%
The challenge is to find the target using a minimal number of steps and oracle calls. We describe a minor variant of this algorithm here (with a minor performance improvement).

The ``obvious'' algorithm explores the whole tree first. As an $(n, \nI)$-tree have $O( n \nI)$ nodes. Thus, a straightforward \DFS would take $O(n \nI)$ steps. One can then find the target using $O\bigl( \log (n \nI ) \bigr)$ oracle calls (using binary search).  This algorithm pays upfront for the exploration -- to do better, one needs to mix the exploration with the binary search.

A natural way to do so is to set an upper bound on the depth of exploration -- that is, the \DFS does not continue the search below nodes of a certain depth, thus avoiding overpaying for exploration, an idea we use below. Specifically, in the $i$\th round, one explores the tree up to depth $i n /\sqrt{\nI}$, then uses binary search using the oracle to discover the leaf leading to the target, and continues the search into this leaf in the next round. This algorithm is (essentially) from Avraham \etal \cite{afkks-dsfds-15}.

\paragraph*{Our results.}
We describe a different approach that performs only partial decimation of the nodes in each round, leading to a faster algorithm. We also show matching lower bounds, first under the assumption that an oracle call costs $\Omega(n)$ time, and the second lower bound is under the assumption that the target is at a leaf, and oracle calls are allowed only on leaves. The latter lower bound leads to an interesting, and somewhat counter-intuitive, lower bound of $\Omega( \log^2 n)$, on searching in a complete binary search tree, where all the values are stored in the leaves, and one has to pay for each edge traversed during the search. See \lemref{bogi} for details.

\section{The algorithm}

In the following, all $\log$s are in base $2$. For integers $x < y$, let $\IRY{x}{y} = \{ x, x+1, \ldots, y\}$ and $\IRX{y} = \IRY{1}{y}$.

\subsection{Description of the algorithm}

The algorithm maintains a subtree $\CTree \subseteq \Tree$, of the part of the tree $\Tree$ that was explored so far.

\subsubsection{Basic operations}

A node $u$ of $\CTree$ marked as a \emphw{stub} can not have the target in its subtree $\Tree_u$. In particular, when a node is marked as a stub, all its children are removed.

\paragraph{Trimming.}

For a specified vertex $u \in \VX{\CTree}$, the algorithm consults with the oracle to decide if $u$ is the target. If so, the algorithm is done. Otherwise, consider the path $\pi(u)$ from $\Root$ to $u$. If, according to the oracle, the target is bigger (resp., smaller) than $u$, then the algorithm marks all the direct left (resp., right) children of nodes of $\pi(u)$ in $\CTree$ as being stubs.

\paragraph{Halving.}

If the current tree $\CTree$ becomes too large (say, it has more than $\constA n$ nodes, where $\constA$ is a sufficiently large constant), it is useful to shrink it to roughly half its current size. For a leaf $u \in \CTree$, let $R(u)$ (resp., $L(u)$) be the set of all the vertices of $\CTree$ that are smaller (resp., larger) than $u$. Using \DFS, the algorithm computes, in linear time in the size of $\CTree$, a node $u$, such that $\cardin{\bigl.|L(u)| -|R(u)|} \leq 1$ (i.e., the ``median'' in the inorder of the nodes of $\CTree$). Next, the algorithm applies the trimming operation to $u$, resulting in an updated $\CTree$ having roughly half of its original size -- more precisely, if $\CTree$ had $m \geq \constA n$ nodes, the new tree has at most $1 + (\constA /2+ 1)n$ nodes (as the nodes on $\pi(u)$ are preserved). This algorithm requires $O( \cardin{\CTree})$ time/steps, and one oracle call.

A straightforward modification of this procedure leads to an alternative version that halves the number of leaves in $\CTree$ (as a reminder, stubs are not considered to be leaves).

\subsubsection{The algorithm in detail}

Let $\psi$ be a prespecified parameter -- roughly, the number of oracle calls the algorithm would perform. For simplicity of exposition, assume the quantities
\begin{equation*}
    L = \frac{\nI}{\psi},
    \qquad%
    \Delta =\frac{2n}{\psi},
    \qquad\text{and}\qquad%
    \UB = L \Delta = \frac{2\nI n}{\psi^2}.
\end{equation*}
are integers. Initially, the explored tree $\CTree_0$ contains only the root $\Root$ of $\Tree$.

\paragraph{The $i$\th round.}
In the beginning of the $i$\th round, for $i = 1, \ldots, \psi/2$, the algorithm starts at the root of $\CTree_{i-1}$, and performs \DFS up to depth $D_i = i \Delta$. The \DFS does not continue through stubs but continues into the unexplored parts of $\Tree$ when reaching leaves of $\CTree_{i-1}$ at depth smaller than $D_i$. Let $\DTree_i$ be the resulting tree. All the leaves of $\DTree_i$ are at depth $D_i$. The algorithm now repeatedly performs halving on $\DTree_i$ till it has at most $\UB$ nodes and at most $L$ leaves. Let $\CTree_i$ be the resulting tree.

\paragraph{After the final round.}
The above results in a tree $\CTree_{\psi}$, that might have up to $\UB$ nodes (and $\nI+1$ leaves). The target is found via a binary search on these nodes, performing $O( \log \UB )$ oracle calls.

\subsection{Analysis}

\begin{lemma}
    In the $i$\th round, the total number of oracle calls is $o_i = O( 1 + f_i /L )$, and the total number of steps is $O\bigl( \UB + f_i \Delta )$, where $f_i$ is the number of new forks discovered in this round.
\end{lemma}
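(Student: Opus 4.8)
The plan is to reduce both claims to two structural estimates on the tree $\DTree_i$ produced by the depth-bounded \DFS: that it has at most $L + f_i$ leaves, and that it has $O\pth{\UB + f_i\Delta}$ nodes. Granting these, the oracle count and the step count will follow from the geometric behaviour of halving. First I would record the invariant that every \emph{extendable} leaf of $\CTree_{i-1}$ (a leaf adjacent to the unexplored part of $\Tree$, as opposed to a genuine leaf of $\Tree$) lies at depth exactly $D_{i-1}$; this holds because the round-$(i-1)$ \DFS descended to depth $D_{i-1}$, and halving only deletes subtrees, never moving a surviving frontier leaf to a shallower depth. Consequently every node explored in round $i$ lies at a depth in $\IRY{D_{i-1}+1}{D_i}$, so the newly attached forest has depth at most $\Delta$.

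For the leaf bound, I would track how the leaf count changes as each extendable leaf of $\CTree_{i-1}$ is grown: extending one such leaf replaces it by a subtree having exactly one more leaf than the number of forks encountered within it, so the net increase is the number of forks on that subtree, and summing over all extended leaves the total net increase is $f_i$. Hence $\DTree_i$ has at most $L + f_i$ leaves. For the node bound, I would charge each newly explored node to the leftmost leaf of $\DTree_i$ lying below it; since the new forest has depth at most $\Delta$, each leaf is charged by at most $\Delta$ ancestors, giving at most $\pth{L+f_i}\Delta$ new nodes. Adding $\cardin{\CTree_{i-1}} \le \UB = L\Delta$ yields $\cardin{\DTree_i} \le 2\UB + f_i\Delta = O\pth{\UB + f_i\Delta}$.

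With these in hand, the oracle count is just the number of halving operations, since the \DFS uses none. Node-halving shrinks the node count from $m$ to at most $m/2 + O(n)$; because $\UB = \Omega(n)$ for the relevant range $\psi = O\pth{\sqrt{\nI}}$, the target $\UB$ lies above the fixed point $O(n)$, so iterating reaches at most $\UB$ nodes after $O\pth{\log\pth{\cardin{\DTree_i}/\UB}} = O\pth{\log\pth{1 + f_i/L}}$ halvings. Likewise the leaf-halving variant brings the leaf count from at most $L + f_i$ down to at most $L$ in $O\pth{\log\pth{1 + f_i/L}}$ steps. Since $\log\pth{1+x} = O(1+x)$, the total is $o_i = O\pth{1 + f_i/L}$ oracle calls. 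For the steps, each halving costs time linear in the current tree size, and these sizes decay geometrically from $\cardin{\DTree_i}$, so their sum is $O\pth{\cardin{\DTree_i}}$; together with the $O\pth{\cardin{\DTree_i}}$ cost of the \DFS itself, this gives $O\pth{\UB + f_i\Delta}$ steps. The main obstacle I expect is the node bound -- in particular justifying the depth-$\Delta$ confinement of the new exploration (hence the clean $\Delta$ factor), and verifying that $\UB$ sits above the halving fixed point so that the geometric reduction actually terminates at the desired size.
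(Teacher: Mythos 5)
Your proof is correct and follows essentially the same route as the paper's: bound the size of $\DTree_i$ by $O\bigl(\UB + (L+f_i)\Delta\bigr) = O\bigl(\UB + f_i \Delta\bigr)$ using the fact that the new exploration is confined to depth $\Delta$ below the old frontier and adds only $f_i$ new leaves, and then charge the oracle calls to the halving operations, whose count is dominated by $O(1 + f_i/L)$. If anything, you are more explicit than the paper on two points it glosses over: the invariant that every extendable leaf of $\CTree_{i-1}$ sits at depth exactly $D_{i-1}$ (which justifies the $\Delta$-length path bound), and the requirement that $\UB = \Omega(n)$ so that halving makes geometric progress rather than stalling at its $O(n)$ fixed point.
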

\begin{proof}
    During the computation of $\DTree_i$, each edge of $\CTree_{i-1}$ (which forms the upper part of $\DTree_i$) is traversed at most twice by the \DFS process. Thus, the work involved in the \DFS over $\CTree_{i-1}$ is bounded by $O(\UB)$, as the algorithm maintains the invariant that the tree computed in each round has size $O(\UB)$ and at most $L$ leaves.

    Each new fork discovered during the $i$\th round can contribute an additional path of length $\Delta$ to $\Tree_i$ till the \DFS either reaches a new fork, or the bottom level $D_i$. Thus, the total size of $\DTree_i$, and thus the total amount of work spent on exploration in this round, is bounded by
    \begin{equation*}
        n_i = O( \UB +
        \#_{\mathrm{leafs}}(\CTree_{i-1}) \Delta + f_i \Delta)
        =
        O\bigl( \UB + (L + f_i) \Delta )
        =
        O\bigl( \UB +  f_i \Delta ),
    \end{equation*}
    since $\UB = \Delta L$.  The halving, done at the end of the round, requires at most
    \begin{equation*}
        o_i
        =
        O\pth{1 + \frac{n_i}{\UB} + \frac{f_i}{L} }
        =%
        O\pth{1 + \frac{ (L + f_i) \Delta}{\Delta L} + \frac{f_i}{L} }
        =%
        O\pth{1 +  \frac{f_i}{L} }
    \end{equation*}
    oracle calls.
\end{proof}

\begin{theorem}
    Given an implicit $(n,\nI)$-tree containing an unknown target, and a parameter $\psi \in \IRX{\nI}$, the target can be found in $O(n \nI /\psi )$ time, using $ O( \psi + \log n)$ calls to an \defrefY{oracle}{oracle}.
\end{theorem}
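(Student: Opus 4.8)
The plan is to charge the total cost to the $\psi/2$ exploration rounds plus the concluding binary search, applying the preceding lemma round by round and summing. The two quantities to control are the number of oracle calls and the number of steps, and the whole argument rests on two global accounting facts: the depths $D_i = i\Delta$ reach $D_{\psi/2} = (\psi/2)\Delta = n$ after the last round, so the entire tree down to its leaves has been explored; and each fork of $\Tree$ is discovered in exactly one round (once found it persists in $\CTree$, and once trimmed it is never rediscovered), whence $\sum_i f_i \le \nI$.

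For the oracle calls, summing the per-round bound $o_i = O(1 + f_i/L)$ over the $\psi/2$ rounds gives
\[
   \sum_{i=1}^{\psi/2} o_i
   = O(\psi) + O\!\pth{\frac{1}{L}\sum_i f_i}
   = O(\psi) + O\!\pth{\frac{\nI}{L}}
   = O(\psi),
\]
using $\sum_i f_i \le \nI$ and $\nI/L = \psi$. For the steps, summing $O(\UB + f_i\Delta)$ over the rounds yields $O\pth{\psi\UB + \Delta\sum_i f_i} = O(\psi\UB + \nI\Delta)$, and since $\psi\UB = \nI\Delta = 2n\nI/\psi$, the exploration over all rounds costs $O(n\nI/\psi)$ steps. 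This is the dominant term and already matches the claimed running time.

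It remains to account for the final binary search over the $O(\UB)$ nodes of the last tree. Traversing that tree costs $O(\UB) = O(n\nI/\psi^2) \le O(n\nI/\psi)$ steps (as $\psi \ge 1$), so it is absorbed into the running time, and walking a pointer during the search adds at most another $O(\UB)$ steps; the search itself uses $O(\log\UB)$ oracle calls. Adding the two contributions gives $O(\psi + \log\UB)$ oracle calls and $O(n\nI/\psi)$ time overall.

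The one delicate point — and the step I expect to require the most care — is reconciling the $O(\log\UB)$ oracle calls of the final search with the clean $O(\log n)$ term in the statement. Since $\UB = 2n\nI/\psi^2$, in the headline regime $\psi = \Theta(\sqrt{\nI})$ one has $\UB = O(n)$, so $\log\UB = O(\log n)$ exactly and everything collapses to $O(\sqrt{\nI} + \log n)$. For general $\psi$ the honest bound is $\log\UB = O(\log(n\nI))$, so obtaining the stated $O(\psi + \log n)$ amounts to $\log\nI = O(\psi + \log n)$; this is immediate when $\psi \ge \sqrt{\nI}$ (then $\log\nI \le 2\log\psi = O(\psi)$) and, for smaller $\psi$, holds under the mild assumption that $\nI$ is polynomially bounded in $n$ (so $\log\nI = O(\log n)$), which is the setting relevant to the bifurcation applications. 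Pinning down precisely which regime the statement intends is the only real subtlety; the rest is the routine summation above.
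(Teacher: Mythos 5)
Your proposal is correct and takes essentially the same route as the paper's proof: sum the per-round bounds $o_i = O(1 + f_i/L)$ and $n_i = O(\UB + f_i\Delta)$ over the rounds, use $\sum_i f_i = \nI$ together with $L = \nI/\psi$ and $\UB = \Delta L$, and then add the $O(\log \UB)$ oracle calls of the concluding binary search. The subtlety you flag at the end is real but cuts in your favor: the paper simply asserts $O(\log \UB) = O(\log n)$ without justification, which holds only when $\log \nI = O(\psi + \log n)$ (e.g., when $\nI$ is polynomial in $n$, or when $\psi \geq \sqrt{\nI}$), so your explicit treatment of that regime is, if anything, more careful than the paper's own argument.
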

\begin{proof}
    The number of oracle calls done before the final cleanup stage is
    \begin{equation*}
        \sum_{i=1}^{\psi} o_i
        =%
        \sum_{i=1}^\psi O\bigl( 1 + f_i /L \bigr)
        =%
        O( \psi +  \nI/L )
        =
        O( \psi )
    \end{equation*}
    since $\sum_i f_i = \nI$ and $L = {\nI}/{\psi}$. The final cleanup requires an additional $O( \log \UB ) = O( \log n)$ oracle calls. The total number of steps/work is
    \begin{equation*}
        \sum_{i=1}^{\psi} n_i
        =%
        \sum_{i=1}^\psi
        O\bigl( \UB +  f_i \Delta )
        =%
        O( \psi \UB +  \nI \Delta )
        =%
        O\pth{
           \frac{\nI n}{\psi} },
    \end{equation*}
    as $\UB = {\nI n} / {\psi^2}$ and $\Delta = {n} / {\psi}$.
\end{proof}

\begin{corollary}
    Given an implicit $(n,\nI)$-tree containing an unknown target node, the target can be found in $O(n \sqrt{\nI} )$ time, using $ O( \sqrt{\nI} + \log n)$ calls to the oracle.
\end{corollary}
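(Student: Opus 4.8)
The plan is to derive the corollary from the theorem by optimizing over the free parameter $\psi$. The theorem exhibits a tradeoff: the running time $O(n\nI/\psi)$ decreases as $\psi$ grows, whereas the number of oracle calls $O(\psi + \log n)$ increases with $\psi$. The polynomial (non-logarithmic) parts of these two bounds coincide precisely when $\psi = \sqrt{\nI}$, so I would set $\psi = \ceil{\sqrt{\nI}}$, taking the ceiling to ensure $\psi$ is an integer.

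Before substituting, I would verify that this choice is admissible, i.e. that $\psi \in \IRX{\nI} = \{1,\ldots,\nI\}$. Since the statement is only of interest for $\nI \ge 1$, we have $1 \le \ceil{\sqrt{\nI}} \le \nI$, so the hypothesis of the theorem is satisfied. Plugging $\psi = \ceil{\sqrt{\nI}} = \Theta\pth{\sqrt{\nI}}$ into the theorem's bounds then yields a running time of $O\pth{n\nI/\sqrt{\nI}} = O(n\sqrt{\nI})$ and an oracle-call count of $O(\sqrt{\nI} + \log n)$, which is exactly the claim.

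There is no genuine obstacle here: all the algorithmic content — the round structure, the trimming and halving operations, and the accounting carried out in the lemma — is already packaged inside the theorem, and the corollary is nothing more than the instantiation $\psi = \sqrt{\nI}$. The only point requiring a moment's care is the integrality and range check on $\psi$ noted above, together with the observation that rounding $\sqrt{\nI}$ up perturbs each bound by at most a constant factor and hence does not affect the stated asymptotics.
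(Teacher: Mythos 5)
Your proof is correct and is exactly the argument the paper intends: the corollary is the instantiation $\psi = \sqrt{\nI}$ (the paper leaves this implicit, stating the corollary without proof immediately after the theorem). Your additional care about integrality and the range check $\psi \in \IRX{\nI}$ is a harmless refinement of the same approach.
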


\begin{remark}
    A natural variant of the above is when all the leaves are at depth $n$, the target must be a leaf, and the oracle queries are performed only on the leaves. In such a case, one can replace the $O(\log n)$ by $O(\log \nI)$ in the number of oracle queries, but that does not seem to be significant\footnote{Not clear anything in this writeup is significant.}.
\end{remark}

\section{Lower bounds}

\subsection{When the decider takes linear time}

\begin{lemma}
    Assume that a call to the decider takes linear time (i.e.,
    $O(n)$). Then, in the worst case, an algorithm exploring an
    $(n,\nI)$ tree must take $O( n \sqrt{\nI})$ time.
\end{lemma}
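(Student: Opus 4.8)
The plan is to read the statement as an upper bound: we must verify that the algorithm behind the parametrized theorem above still runs in time $O(n\sqrt{\nI})$ once each decider (oracle) call is charged $\Theta(n)$ time instead of being free. The only new ingredient relative to that theorem is that oracle calls now contribute to the running time, so the proof amounts to instantiating the parametrized bound with the oracle cost folded in, and then choosing $\psi$ to balance the two sources of work.

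First I would recall the two quantities supplied by the theorem for a parameter $\psi \in \IRX{\nI}$: the exploration/step cost is $O(n\nI/\psi)$, while the number of oracle calls is $O(\psi + \log n)$. Charging $O(n)$ per call, the total running time under the linear-decider model is
\begin{equation*}
    T(\psi) = O\pth{ \frac{n\nI}{\psi} } + O(n)\cdot O\pth{\psi + \log n}
    = O\pth{ \frac{n\nI}{\psi} + n\psi + n\log n }.
\end{equation*}
The first term is decreasing in $\psi$ and the second increasing, so I would balance them by setting $n\nI/\psi = n\psi$, i.e.\ $\psi = \sqrt{\nI}$ (rounding to an integer in $\IRX{\nI}$, which is legitimate since $1 \le \sqrt{\nI} \le \nI$ for $\nI \ge 1$). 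Substituting gives $T\pth{\sqrt{\nI}} = O\pth{n\sqrt{\nI} + n\log n}$, where the two balanced terms are each $O(n\sqrt{\nI})$, matching the exploration cost already established in the theorem.

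The one point that needs care — and the only real obstacle — is the residual additive $n\log n$ term, which comes entirely from the $O(\log n)$ oracle calls of the final cleanup stage (with $\psi=\sqrt{\nI}$ the surviving tree has $\UB = n$ nodes, so the binary search costs $O(\log \UB) = O(\log n)$ calls). This term is dominated by $n\sqrt{\nI}$ exactly when $\sqrt{\nI} \ge \log n$, i.e.\ $\nI = \Omega(\log^2 n)$, which is the regime of interest here; I would either state this mild assumption explicitly or fold the $O(\log n)$ cleanup calls into the charged bound, after which the worst-case running time is $O(n\sqrt{\nI})$ as claimed. No adversary argument is required: the bound is achieved by the concrete algorithm of the theorem under the single tuning $\psi = \sqrt{\nI}$.
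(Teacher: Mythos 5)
You have misread the statement, and as a result proved the wrong thing. This lemma sits in the section titled ``Lower bounds,'' and the phrase ``must take'' signals that the intended claim is $\Omega(n\sqrt{\nI})$ --- the $O(\cdot)$ in the statement is a typo. The paper's proof is an adversary argument, not an instantiation of the algorithm: the adversary builds the complete binary tree of height $h = \sqrt{\nI}$ with every edge replaced by a path of length $\Delta = n/\sqrt{\nI}$, answers each oracle call so as to maximize the set of vertices that could still contain the target, and, once the player has exposed $\nI$ forks, converts all undiscovered forks into regular nodes so the tree is a legitimate $(n,\nI)$-tree. The case analysis then gives the bound: either the player discovered $\nI$ forks, paying $\Omega(\nI \Delta) = \Omega(n\sqrt{\nI})$ in exploration steps, or the information-theoretic lower bound for binary search over $2^{\sqrt{\nI}}$ candidates forces $\Omega(\sqrt{\nI})$ oracle calls, each of which costs $\Omega(n)$ under the linear-decider assumption --- again $\Omega(n\sqrt{\nI})$. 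None of this appears in your proposal, so the lemma remains unproved.

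Your argument is not wasted --- it is a correct verification that the algorithm of the earlier theorem, with $\psi = \sqrt{\nI}$, achieves total time $O(n\sqrt{\nI} + n\log n)$ when each oracle call is charged $\Theta(n)$, and your handling of the residual $n\log n$ cleanup term (requiring $\nI = \Omega(\log^2 n)$ or absorbing the cleanup calls) is careful. But that establishes the \emph{upper} bound that this lemma is meant to \emph{match}; had the lemma really been an upper bound, it would be an immediate corollary of the theorem and would have no business in the lower-bounds section. A useful sanity check for the future: when a statement in a lower-bounds section says an algorithm ``must take'' a certain amount of time, and a purely algorithmic reading would make it redundant with results already proved, suspect a typo in the asymptotic notation and look for the adversary.
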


\begin{proof}
    Consider the ``complete'' binary tree of height $h=\sqrt{\nI}$, where we replace each edge by a path of length $\Delta = n/\sqrt{\nI}$ (i.e., it is made out of $\Delta$ edges). Let $\Tree$ denote this tree.

    The adversary strategy is now the following -- whenever the oracle is called, it returns the answer that maximizes the number of vertices of the tree $\Tree$ that might contain the target. If the player had exposed $\nI$ forks, then the adversary amends all the undiscovered forks in $\Tree$ into regular nodes by deleting one of their children (thus, after this ``trimming'' $\Tree$ has size at most $O(n \nI)$).

    Now, if the player forced the adversary to trim $\Tree$, then it already spent $\Omega( \nI \Delta ) = \Omega(n \sqrt{\nI})$ time on exploration as it had to traverse at least $\nI-1$ edges, each one of length $\Delta$. Otherwise, the standard lower bound on binary search over $L = 2^{\sqrt{\nI}}$ elements, implies the algorithm must have issues $\Omega( \log L ) = \Omega(\sqrt{\nI})$ oracle calls, which requires $\omega( \sqrt{\nI} n )$ time, as claimed.
\end{proof}

\subsection{When the oracle can be consulted only in leaves}

An interesting variant is that an oracle call still costs only $O(1)$ time, but one can perform such calls on leaves of depth $n$ (here we assume the target is also a leaf of depth $n$).

\subsubsection{Searching for a needle in a complete binary tree}

Consider a complete binary tree $\Tree$ of height $h$. There are $\nu = 2^{h}$ leafs in $\Tree$, and let $i \in \II = \IRY{0}{\nu-1}$ be the label associated with the $i$\th leaf in the inorder traversal of $\Tree$. We label the $i$\th leaf with the binary string $B = \{0,1\}^h$ whose binary value is $i$ (i.e., the binary string encodes the path from the root to this leaf). The target $t$ is one of these leaves, and an oracle query must be one of the leaves of $\Tree$.

The player asks a sequence of queries, where a query is a label in $\II$. For a query $q$, the adversary either answers that $q \prec t$, $q \succ b$, or $q =t$ (i.e., the target $t$ was found). After the $i$\th iteration, there is an \emphw{active} range $R_i = \IRY{x_i}{y_i}$ where $t$ might be, where $R_1 = \II$.

\paragraph{The pricing model.} %
The \emphi{rank} of a node $x \in \VX{\Tree}$ is the height of its subtree. Thus, a leaf of $\Tree$ has rank zero, while the root $\Root$ has rank $h$. Given two strings $p, q \in B$, their \emphw{LCA distance} is the rank of their $\LCAY{p}{q}$, and it is denoted by $\dLCAY{q}{q'}$.

Given a sequence of queries $Q = \{q_1, \ldots, q_{k}\}$, let $Q_i = \{ q_1, \ldots, q_i \}$ denote the $i$\th prefix. The set of edges in the tree that must be traversed before one can use the oracle is the set
\begin{equation*}
    E(Q) = \cup_{q \in Q} \pi(q).
\end{equation*}
In particular, the \emphi{price} of the $i$\th query is %
\begin{equation*}
    |E(Q_i)| - |E(Q_{i-1})|
    =%
    \dLCAY{q_i}{Q_{i-1}} = \smash{\min_{p \in Q_{i-1}}} \dLCAY{q}{p}.
\end{equation*}
And the overall price of $Q$ is
\begin{math}
    |E(Q)| = \sum\nolimits_{i=1}^k \dLCAY{q_i}{Q_{i-1}}.
\end{math}

Here, we show that the optimal strategy has a price $\Omega(h^2)$ in the worst case.

\paragraph{The adversary strategy.}
In the $i$\th round, if the query $\beta_i\in \II$ is outside the current active range $R_{i} = \IRY{x_{i}}{\beta_{i}}$, the adversary returns the relation between $\beta_i$ and (say) $x_{i}$ (and sets $R_{i+1} \leftarrow R_{i}$).  Otherwise, the query $\beta_i \in R_{i}$, and it breaks $R_{i}$ into two ranges
\begin{equation*}
    A_i = \IRY{x_{i}}{\beta_i -1}%
    \qquad\text{and}\qquad%
    B_i = \IRY{\beta_i +1}{y_{i}}.
\end{equation*}
The adversary answers ``$ t\prec \beta_i$'' if $\cardin{A_i} > \cardin{B_i}$ (i.e., $R_i \leftarrow A_i$), and answers ``$t \succ \beta_i$'' (i.e., $R_{i+1} \leftarrow B_i$) otherwise.  The game ends when the target is isolated, that is, $\cardin{R_{i+1}}=1$.

For the simplicity of exposition, we assume the player is aware of the adversary strategy, so it is not going to waste time on useless queries -- such as a query outside the active range, or at the start/end values in the current active range if it has more than two values (both assumptions can be removed with some tedium).

\paragraph{Analysis.}

\begin{figure}[h!]
    \centerline{\includegraphics{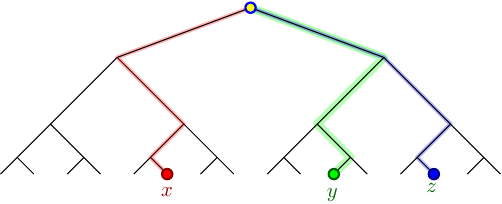}}
    \caption{}
    \figlab{path:trap}
\end{figure}
\begin{lemma}
    \lemlab{trap}%
    Consider three strings $x,y,z \in B$, with $x<y<z$, where first the $\pi(x) \cup \pi(y)$ are queries were already performed (i.e., $Q = \{ x,z,y\}$). Then, the price of $y$ is $\rho = \dLCAY{y}{\{x,z\}}$, where
    \begin{math}
        \rho \geq \min\Bigl( \log(y-x), \log(z-y) \Bigr).
    \end{math}
    and also $\rho \geq 1$.
\end{lemma}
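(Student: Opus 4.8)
The plan is to unwind the definition of price and reduce the claim to one clean fact about complete binary trees. By the pricing model above, querying $y$ after $x$ and $z$ have already been queried (so $Q_{i-1} = \{x,z\}$) costs exactly $\rho = \dLCAY{y}{\{x,z\}} = \min\pth{\dLCAY{y}{x}, \dLCAY{y}{z}}$: the already-traversed portion of $\pi(y)$ is the stretch from the root down to the deepest node that is simultaneously an ancestor of $y$ and of a previously queried leaf, and the newly paid edges are exactly those lying below it, with their count equal to that node's rank. Hence it suffices to lower bound each of $\dLCAY{y}{x}$ and $\dLCAY{y}{z}$ separately and take their minimum.

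The key step is an observation tying the rank of an LCA to the inorder gap it can span. Because each leaf is labelled by the binary value of its root-to-leaf path, the leaves of the subtree rooted at any node form a contiguous block in the inorder order, and a node of rank $r$ roots a complete subtree with exactly $2^r$ leaves. Therefore, writing $r_1 = \dLCAY{x}{y}$, both $x$ and $y$ lie in the $2^{r_1}$-leaf block hanging below $\LCAY{x}{y}$, so $y - x \le 2^{r_1} - 1 < 2^{r_1}$ and thus $\dLCAY{x}{y} > \log(y-x)$. Running the identical argument on the pair $y,z$ gives $\dLCAY{y}{z} > \log(z-y)$. Taking the minimum yields $\rho \ge \min\pth{\log(y-x), \log(z-y)}$, the main inequality. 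The auxiliary bound $\rho \ge 1$ then follows immediately: since $x < y < z$, the leaf $y$ differs from both $x$ and $z$, so neither $\LCAY{x}{y}$ nor $\LCAY{y}{z}$ is the rank-$0$ leaf $y$ itself, forcing both ranks, and hence $\rho$, to be at least $1$.

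I expect the only real obstacle to be phrasing and justifying the contiguous-block / subtree-size fact cleanly, since everything downstream is a one-line deduction; the geometry is exactly the ``trap'' depicted in \figref{path:trap}, where the two explored paths $\pi(x)$ and $\pi(z)$ bracket $y$ and the new cost is the depth of $y$ below the lower of the two branch points. One subtlety worth flagging is that the stated bound uses $\ge$ whereas the argument actually delivers the strict $>\log(\cdot)$; this is harmless, but I would double-check the degenerate case $y-x=1$ (where $\log(y-x)=0$ and the content is carried entirely by $\rho \ge 1$) to confirm the claimed inequality is never vacuously violated.
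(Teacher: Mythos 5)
Your proposal is correct and follows essentially the same route as the paper: both arguments rest on the fact that a node of rank $\rho$ spans at most $2^\rho$ consecutive leaves (the paper phrases this as its two extreme leaves being at distance $2^\rho - 1$), so an inorder gap of $\alpha$ forces the LCA rank to be at least $\log \alpha$. The only cosmetic difference is that the paper reduces to the minimizing side by symmetry before bounding, while you bound both sides and then take the minimum; your closing remark about the strict inequality and the $\rho \geq 1$ case matches the paper's chain $\rho \geq \lceil \log(\alpha+1) \rceil = 1 + \lfloor \log \alpha \rfloor$.
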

\begin{proof}
    Assume the situation is as depicted in \figref{path:trap}, and $\rho = \dLCAY{y}{z} < \dLCAY{x}{y}$, as the other case follows by symmetry. Let $\alpha = z' - y'$. For a node of rank $\rho$, the distance between its two extreme leaves is $2^\rho-1$. Thus,
    \begin{equation*}
        2^\rho - 1 \geq \alpha
        \implies
        \rho \geq
        \ceil{\log( \alpha+1)} = 1 + \floor{\log \alpha}
        \geq \log \alpha.
    \end{equation*}
\end{proof}

\begin{lemma}
    \lemlab{bogi}%
    Consider a complete binary search tree $\Tree$ of height $h$. Against an adaptive adversary, when queries can be issued only at the leaves, any game that ends with the target found has a price of at least $\Omega(h^2)$.
\end{lemma}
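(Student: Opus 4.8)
The plan is to track the size of the adversary's active range and charge each query to the amount of the range it destroys. Write $s_i = \cardin{R_i}$, so that $s_1 = \nu = 2^h$ and the game halts once $s_i = 1$, and set $d_i = s_i - s_{i+1}$, so $\sum_i d_i = 2^h - 1$ telescopes. Because the adversary always retains the larger of the two pieces $A_i, B_i$, the surviving piece has size $s_{i+1} = \max\pth{\cardin{A_i},\cardin{B_i}}$ while the discarded piece has size exactly $\min\pth{\cardin{A_i},\cardin{B_i}} = d_i - 1$. This yields at once the feasibility constraint $d_i - 1 \le s_{i+1} = s_i - d_i$, i.e.\ $d_i \le (s_i+1)/2$: a single round can eliminate at most (about) half of the surviving range.

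The crux input is the per-query price bound
\begin{equation*}
    \dLCAY{\beta_i}{Q_{i-1}} \;\ge\; \max\pth{1,\ \log d_i}.
\end{equation*}
This follows cleanly. Every previously issued query lies \emph{outside} the current active range $R_i = \IRY{x_i}{y_i}$, so for any $p \in Q_{i-1}$ we have $\cardin{\beta_i - p} \ge \min\pth{\cardin{A_i}+1, \cardin{B_i}+1} = \min\pth{\cardin{A_i},\cardin{B_i}} + 1 = d_i$, since $\beta_i \in R_i$ and the nearest exterior point on either side is at value-distance $\cardin{A_i}+1$ or $\cardin{B_i}+1$. The elementary fact proved inside \lemref{trap} — that two leaves at value-distance $\alpha$ satisfy $d_{\mathrm{lca}} \ge \log \alpha$ — then gives $\dLCAY{\beta_i}{p} \ge \log d_i$ for every $p$, hence the price (a minimum over $Q_{i-1}$) is at least $\log d_i$. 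The floor of $1$ is also supplied by \lemref{trap}, and the first query, with $Q_0 = \emptyset$, pays the entire path $\cardin{\pi(\beta_1)} = h$.

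It remains to minimize the total price under these constraints, which is the heart of the matter. Let $T(s)$ be the least price forced when the active range starts at size $s$; then $T(1) = 0$ and
\begin{equation*}
    T(s) \;\ge\; \min_{1 \le d \le (s+1)/2} \bigl[\, \max\pth{1, \log d} + T(s-d) \,\bigr].
\end{equation*}
I would prove $T(s) \ge \tfrac14 \log^2 s$ by strong induction, so that $T(2^h) \ge \tfrac14 h^2 = \Omega(h^2)$. The inductive step reduces to checking $\max\pth{1,\log d} + \tfrac14\log^2(s-d) \ge \tfrac14\log^2 s$ for every feasible $d$. As a function of $d$ the left-hand side is decreasing on $\IRY{1}{2}$ and concave on $\IRY{2}{(s+1)/2}$ (since $s-d \ge (s-1)/2 > e$ there), so its minimum sits at $d = O(1)$ or at the median $d \approx (s+1)/2$. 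At $d = O(1)$ the defect $\log^2 s - \log^2(s-d) = O\pth{(\log s)/s}$ is absorbed by the additive $1$; at the median, $\log(s/2) + \tfrac14\log^2(s/2) = \tfrac14\log^2 s + \tfrac12\log s - \tfrac34 \ge \tfrac14\log^2 s$. Both endpoints clear the bar, closing the induction.

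The main obstacle is precisely this last optimization: the player may try to cheat either by hugging a boundary of the range (each query cheap, $d_i = O(1)$, but then $\Omega(2^h)$ of them are needed and the floors of $1$ accumulate) or by a few near-median queries (few in number, but each costing $\Theta(\log s_i) = \Theta(h-i)$). The induction shows the optimum of this trade-off is $\Theta(h^2)$, realized by the near-median strategy and matching the obvious $O(\log^2)$ upper bound. A minor technical point I would nail down along the way is that all previously queried leaves indeed lie outside $R_i$ (so the value-distance estimate in the price bound is valid), which is immediate from the adversary's invariant that $R_i$ is exactly the set of still-possible targets.
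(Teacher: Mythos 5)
Your proof is correct, and it shares the paper's skeleton: the majority-keeping adversary, the per-query price bound ``log of the piece destroyed'' obtained from the LCA-versus-label-distance fact established inside \lemref{trap}, and an induction on the size of the active range that solves to $\Omega(\log^2)$. What genuinely differs is the bookkeeping of that induction. The paper groups queries into phases during which the range halves, bounds the cost of one phase by $\log \prod_k \cardin{J_k} \geq \log (\Delta/2)$ (the product over the discarded intervals, minimized by a single large interval), and unrolls the phase recursion $P(\Delta) \geq \log(\Delta/2) + P(\Delta/4)$. You instead run a single-query recursion $T(s) \geq \min_{1 \leq d \leq (s+1)/2} \bigl[ \max(1, \log d) + T(s-d) \bigr]$ and verify the explicit potential $\tfrac14\log^2 s$ by a concavity/endpoint analysis. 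Your route buys two real clean-ups: the $\max(1,\cdot)$ floor prices endpoint-hugging queries ($d_i = 1$) correctly, so you do not need the paper's assumption that the player never queries the start/end values of the active range; and since you bound the distance from $\beta_i$ to \emph{every} previous query (all of which lie outside $R_i$), you avoid the paper's simplifying assumption that $0$ and $2^h-1$ are outside the range and that both endpoints of $R_i$ were created by actual queries. The price you pay is having to do the optimization honestly: the concavity of $\log d + \tfrac14\log^2(s-d)$ needs $s-d > e$, so the handful of small cases (say $s \leq 10$) still require the same direct check the paper also waves at, and you should state explicitly that $T(s)$ is defined as an infimum over all reachable query configurations with range size $s$ --- your price bound is configuration-independent, so the recursion does compose, but that sentence belongs in the writeup.
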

\begin{proof}
    By induction on the length of the active range. Let $P(\Delta)$ be the minimum price of any game carried out once the active range has $\Delta$ elements. The claim is that $P( \Delta ) \geq c \log^2 \Delta$, where $c > 0$ is some constant, and $\Delta > 1$.  The claim clearly holds for $\Delta \leq 10$, so assume $\Delta > 10$.

    So, the game is at the beginning of the $i$\th round, and let $R_{i}$ be the active range with $|R_{i}| = \Delta$. Let $j$ be the minimum index such that $\cardin{R_j} \leq \cardin{R_{i}}/2$ (observe that $\cardin{R_j} > \Delta/4$).  Let $q_{i}, \ldots, q_{j-1}$ be the queries used in these rounds -- we assume for simplicity of exposition that $q_k$ is in the interior of $R_k$. Let $J_{i}, \ldots, J_{j-1}$ be the ranges thrown away. Specifically, $R_{k+1} = R_{k} \setminus J_{k}$, and $q_{k}$ is an endpoint of $J_{k}$, for all $k$. By the adversarial behavior, we have that $\cardin{J_{k}} \leq \cardin{R_{k+1}} + 1$ (the $+1$ is for the case where $|R_k|$ is odd, and $q_k$ is the median).

    For simplicity of exposition, assume that $0$ and $2^h-1$ are not in $R_u$, for all $u \in \IRY{i}{j}$. The range $R_{k}=\IRY{x_{k}}{y_{k}}$ was created because the player asked the queries $x_{k}-1$ and $y_{k}+1$ in some earlier rounds. Thus, the price of $q_k$ is $\xi_k = \dLCAY{q_{k}}{\{x_{k}-1, y_{k}+1\}}$, as the root path $\pi(q_k)$ is ``trapped'' between $\pi(x_{k}-1)$ and $\pi(y_{k}+1)$.  Thus, by \lemref{trap}, $\xi_k \geq \log \cardin{J_{k+1}} \geq 1$, since $\cardin{J_{k+1}} \geq 2$.  We have
    \begin{equation*}
        P(\Delta)
        \geq
        \sum_{k=i}^{j-1} \xi_k
        + P(|R_j|)
        \geq
        \log \prod_{k=i+1}^{j} \cardin{J_k}
        + P(\Delta/4).
    \end{equation*}
    The latter is minimized by taking the refuse intervals $J_k$ to be as large as possible (and minimizing their overall number). In particular, setting $|J_{i+1}|$ to its maximum possible value $\ceil{\Delta/2}$, we have
    \begin{equation*}
        P(\Delta)
        \geq
        \log \frac{\Delta}{2}
        + P(\Delta/4),
    \end{equation*}
    which readily implies that $P(\Delta) = \Omega(\log^2 \Delta)$.
\end{proof}

\paragraph{Lower bound for the $(n,\nI)$-tree case}

We assume $n > \nI$. The idea is to create an $(n,\nI)$-tree, by taking the complete binary tree of height $h = \sqrt{\nI}$, and replacing an edge by a path of length $\nabla = n/\sqrt{\nI}$. Clearly, the resulting tree is an $(n,\nI)$-tree, but it has way too many forks (i.e., $2^{\sqrt{\nI}}$ instead of $\nI$). The key insight is that this does not matter -- as soon as the player visits $\nI$ forks, the adversary has already won, and it can pretend that the rest of the unexplored tree never existed.

We change the charging scheme in the above for exploring $(n, \nI)$-tree -- the player has to pay for all the steps made along a path, only when it arrives at a fork, and then it has to pay for all the vertices visited on this path. The user's strategy here can be interpreted as one applied to a complete binary tree. \lemref{bogi} implies that this strategy must visit $\Omega(h^2) = \Omega(\nI)$ forks. Still, the total price of the edges arriving at these forks is $\Omega(h^2 \nabla ) = \Omega( n h ) = \Omega( \sqrt{\nI}n)$.

\begin{theorem}
    Any algorithm that finds an unknown target in an $(n,\nI)$-tree, and issues queries only at the leaves, must make $\Omega( \sqrt{\nI} n )$ steps, in the worst case.
\end{theorem}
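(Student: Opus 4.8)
The plan is to reduce the statement to \lemref{bogi} by a subdivision construction, following the outline given just above. First I would fix the hard instance: take the complete binary search tree of height $h=\sqrt{\nI}$ and replace every edge by a path of $\nabla=n/\sqrt{\nI}$ edges, so that every leaf sits at depth $h\nabla=n$, the target is one of these depth-$n$ leaves, and the inorder ordering of the leaves is exactly that of the underlying complete binary tree. Since queries are allowed only at depth-$n$ leaves, a leaf query in the subdivided tree is literally a leaf query in the height-$h$ complete binary tree with the same $\prec/\succ$ semantics. Thus any algorithm for the subdivided instance induces a query strategy on the complete binary tree, and I would answer every oracle call by running the adaptive adversary of \lemref{bogi} against this induced strategy.

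Next I would translate cost. Because queries occur only at leaves, to query a leaf the player must physically stand on it and hence must have traversed every edge of its root path $\pi(\cdot)$ at least once; therefore the number of steps is at least $\cardin{E(Q)}$, the number of distinct edges on $E(Q)=\cup_{q\in Q}\pi(q)$. Under the subdivision each edge of the complete binary tree becomes a length-$\nabla$ path, so $\cardin{E(Q)}$ measured in the subdivided tree equals $\nabla$ times the price $\cardin{E(Q)}$ of the same query set measured in the complete binary tree. By \lemref{bogi} the latter is $\Omega(h^2)$ for any game that ends with the target isolated. Multiplying, the number of steps is at least $\nabla\cdot\Omega(h^2)=(n/\sqrt{\nI})\cdot\Omega(\nI)=\Omega(\sqrt{\nI}\,n)$, which is the claimed bound.

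The step I expect to be the main obstacle is reconciling the fork budget, since the subdivided tree has $2^{\sqrt{\nI}}-1$ forks, far more than the $\nI$ that an $(n,\nI)$-tree is allowed. I would resolve this with the case split implicit in the construction. If the player ever visits $\nI$ distinct forks, then its root-path edges already contain $\Omega(\nI)$ binary-tree edges, each blown up to length $\nabla$, so it has paid $\Omega(\nI\nabla)=\Omega(\sqrt{\nI}\,n)$ just for traversal and we are done. Otherwise the player isolates the target having visited at most $\nI$ forks, and I would exhibit a genuine $(n,\nI)$-tree $\Tree$ consistent with the whole run: keep as forks exactly the forks the player visited (these include every fork on a root path to a queried leaf and, if needed, the $\le h$ forks on the target's root path), regularize every remaining fork by deleting one child, and pad with extra single-branch forks inside never-visited regions until the count is exactly $\nI$.

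The delicate point in this last step, which I would verify carefully, is consistency of all previously returned oracle answers after regularization: deleting a subtree changes the inorder positions of leaves, so I must only ever delete subtrees that contain no queried leaf and not the target. Since all forks on the root paths to queried leaves and to the target are retained, the relative order of the relevant leaves is preserved, so each answer ``$t\prec q$'' or ``$t\succ q$'' still holds under the inorder of $\Tree$; the padded forks live in untouched regions and affect nothing. Granting this, the adversary's answers are realizable by a fixed $(n,\nI)$-tree, and \lemref{bogi} forces the advertised price. The remaining bookkeeping — that counting each edge once in $E(Q)$ never overcounts the actual steps, and that $n>\nI$ makes $\nabla\ge 1$ an integer length — is routine.
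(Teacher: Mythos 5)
Your proposal is correct and follows essentially the same route as the paper: the same subdivided complete binary tree of height $\sqrt{\nI}$ with edges blown up to paths of length $\nabla = n/\sqrt{\nI}$, the same reduction of leaf queries to the adversary of \lemref{bogi}, and the same case split --- either the player visits $\nI$ distinct forks and has thereby already paid $\Omega(\nI \nabla) = \Omega(\sqrt{\nI}\, n)$ in steps, or the adversary trims the unvisited forks so the run is consistent with a genuine $(n,\nI)$-tree and \lemref{bogi} forces the price. Your additional bookkeeping (padding the fork count to exactly $\nI$, and verifying that deleting only subtrees with no queried leaf and not the target keeps all past answers consistent) merely fills in details the paper leaves implicit.
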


\paragraph*{Acknowledgments.}
The author thanks Timothy Chan and Micha Sharir for insightful discussions on this problem.

\printbibliography

\end{document}